\def\BibTeX{{\rm B\kern-.05em{\sc i\kern-.025em b}\kern-.08em
    T\kern-.1667em\lower.7ex\hbox{E}\kern-.125emX}}
\theoremstyle{plain}
\theoremstyle{definition}
\newtheorem{definition}{Definition}
\newtheorem{lemma}{Lemma}
\theoremstyle{remark}
\newcommand{\etal}{\textit{et al}.}
\newcommand{\one}{\mathbbm{1}}
\newcommand{\bmx}{\bm{x}}
\DeclarePairedDelimiter\ceil{\lceil}{\rceil}
\newcommand*{\rom}[1]{\expandafter\@slowromancap\romannumeral #1@}
\begin{document}

\title{Using List Decoding to Improve \\the Finite-Length Performance of \\Sparse Regression Codes}

\author{Haiwen~Cao~and~Pascal O.~Vontobel,~\IEEEmembership{Fellow,~IEEE}% <-this % stops a space
\thanks{Haiwen~Cao is with the Department of Information Engineering, The Chinese University of Hong Kong, Hong Kong (e-mail: ch017@ie.cuhk.edu.hk).}% <-this % stops a space
\thanks{Pascal O.~Vontobel is with the Department of Information Engineering, The Chinese University of Hong Kong, Hong Kong, and also with the Institute of Theoretical Computer Science and Communications, The Chinese University of Hong Kong, Hong Kong (e-mail: pascal.vontobel@ieee.org).}% <-this % stops a space
\thanks{This work was supported in part by a grant from the Research Grants Council of the Hong Kong Special Administrative Region, China, under Project CUHK 14209317.}% <-this % stops a space
\thanks{This work was published in part at the IEEE Information Theory Workshop (ITW), Riva del Garda, Italy, which appears in this manuscript as reference \cite{Cao2104:Using}.}
}

% The paper headers
\markboth{IEEE Transactions on Communications}%
{Submitted paper}

\maketitle

\begin{abstract}
We consider sparse superposition codes (SPARCs) over complex AWGN channels. Such codes can be efficiently decoded by an approximate message passing (AMP) decoder, whose performance can be predicted via so-called state evolution in the large-system limit. In this paper, we mainly focus on how to use concatenation of SPARCs and cyclic redundancy check (CRC) codes on the encoding side and use list decoding on the decoding side to improve the finite-length performance of the AMP decoder for SPARCs over complex AWGN channels. Simulation results show that such a concatenated coding scheme works much better than SPARCs with the original AMP decoder and results in a steep waterfall-like behavior in the bit-error rate performance curves. Furthermore, we apply our proposed concatenated coding scheme to spatially coupled SPARCs. Besides that, we also introduce a novel class of design matrices, i.e., matrices that describe the encoding process, based on circulant matrices derived from Frank or from Milewski sequences. This class of design matrices has comparable encoding and decoding computational complexity as well as very close performance with the commonly-used class of design matrices based on discrete Fourier transform (DFT) matrices, but gives us more degrees of freedom when designing SPARCs for various applications. 
\end{abstract}

\begin{IEEEkeywords}
complex AWGN channel, AMP decoder, sequences, design matrix construction, error detection, list decoding, spatial coupling.
\end{IEEEkeywords}

\IEEEpeerreviewmaketitle

\section{Introduction}
\IEEEPARstart{S}{parse} superposition codes (SPARCs), also known as sparse regression codes, were first introduced by Joseph \etal~\cite{joseph2012least} for efficient communication over additive white Gaussian noise (AWGN) channels, but have also been used for lossy compression \cite{venkataramanan2014lossy, venkataramanan2014lossyperform} and multi-terminal communication \cite{venkataramanan2012sparse}. In this paper, we will only consider SPARCs for channel coding, especially for point-to-point communication. Unlike low-density parity-check (LDPC) codes with coded modulation, SPARCs directly map the message to a codeword. These codewords are formed as sparse linear combinations of the columns of a so-called design matrix. The structure of the design matrix allows one to construct low-complexity decoders with performance reasonably close to the computationally intractable maximum-likelihood decoder. Joseph \etal\ first introduced an efficient decoding algorithm called ``adaptive successive decoding'' in \cite{joseph2013fast}, and then Barron \etal~\cite{barron2012high} proposed an adaptive soft-decision successive decoder that has much better finite-length performance compared with the original successive decoder. Subsequently, a series of papers~\cite{barbier2014replica, barbier2017approximate, rush2017capacity} introduced approximate message passing (AMP) decoders, which are a class of algorithms \cite{bayati2011dynamics} approximating loopy belief propagation on dense factor graphs, for SPARCs. The adaptive soft-decision decoder in~\cite{barron2012high} and AMP decoders in~\cite{rush2017capacity} have all been proven to be asymptotically capacity-achieving when one assumes the entries of the design matrix to be i.i.d.\ samples from some zero-mean Gaussian distribution with a suitable variance. In the following, we will only consider AMP decoders for SPARCs.

The above results mainly focus on the asymptotic characterization of the error performance of SPARCs, but barely consider the finite-length performance of SPARCs. In order to improve the finite-length performance of SPARCs, Greig \etal~\cite{greig2017techniques} proposed several techniques, which include an iterative power allocation algorithm for SPARCs and concatenating LDPC codes with original SPARCs as their inner codes. These two techniques can significantly improve the finite-length performance. However, the iterative power allocation algorithm is very sensitive to code parameters as well as the channel quality, and the algorithm is controlled by some parameter related to the code rate, which can only be tuned via running extensive simulations for different values of this parameter to try to find the ``optimal'' value. This sensitivity is rather suboptimal for the realistic scenario where the channel quality is unknown or varies slowly with time. Moreover, the concatenated coding scheme only works well when the signal-to-noise ratio (SNR) is above some threshold and its performance can be even much worse than the original SPARCs when the SNR is below the threshold, although this coding scheme results in a steep waterfall-like behavior in the bit-error rate performance curves.
 
In this paper, we will mainly discuss how to tackle the above two issues, i.e., the sensitivity issue of the iterative power allocation algorithm and the degraded performance of SPARCs concatenated with LDPC codes when the SNR is below the threshold, and further improve the finite-length performance of SPARCs over complex AWGN channels. The techniques used here can be straightforwardly applied also to the real case. Besides that, we will apply this technique to spatially coupled SPARCs (SC-SPARCs), which will be discussed in Section~\ref{Extention_SC}. Moreover, we will introduce a novel class of design matrices based on circulant matrices. The contributions of this work are as follows:
\begin{enumerate}
	\item The main contribution of this paper is a concatenated coding scheme to tackle issues happening in the previous work by Greig \etal~\cite{greig2017techniques}. More specifically, we use SPARCs concatenated with CRC codes on the encoding side and propose the use of list decoding on the decoding side to further improve the finite-length performance. The improvement will be shown via simulation results. (Details about this coding scheme appear in Section~\ref{List_Decoding_Section}.)
	\item We introduce an alternative design-matrix construction, which uses a circulant matrix with a Frank sequence \cite{frank1962phase} or a Milewski sequence \cite{milewski1983periodic} as its leading row, and this alternative class of design matrices has comparable encoding and decoding computational complexity as well as very close performance with the commonly-used class of design matrices based on DFT matrices in the complex case. Moreover, it provides us with more degrees of freedom when designing SPARCs for various applications. (See Section~\ref{Alternative_Matrix_Section} for details.)
	\item As a side contribution, we propose a variant of the AMP decoder for complex AWGN channels which we have derived from a first-order approximation of some message-passing algorithm, and also provide the corresponding state evolution which is similar to the one for the real case. Due to the similarity to the real case, the iterative power allocation scheme and online estimation of parameters in the state evolution for the real case can be suitably modified to complex AWGN channels. (See Sections~\ref{Decoding} and~\ref{iterative_PA_section} for details.)
\end{enumerate}

The rest of this paper is structured as follows. We first provide some notations that will be used throughout this paper. In Section ~\ref{backgrounds}, we give some background material on SPARCs over complex AWGN channels. In Section~\ref{Alternative_Matrix_Section}, we first briefly discuss the commonly-used class of design matrices based on DFT matrices and then propose a novel class of design matrices. In Section~\ref{List_Decoding_Section}, we introduce the concatenated coding scheme for SPARCs concatenated with CRC codes (see Fig. \ref{diagram_CRC}). Moreover, simulation results are given in this section to demonstrate the significantly improved finite-length performance of the proposed coding scheme. In Section~\ref{Extention_SC}, the extension of this concatenated coding scheme to SC-SPARCs will be discussed. Finally, we conclude the paper in Section~\ref{conclusion}. Throughout this paper, the large-system limit refers to $L$, $M$, $n \to \infty$ while keeping $L\log{M} = nR$, where $L$, $M$, $n$ and $R$ will be specified in Section~\ref{Encoding}.

\subsection{Notations}\label{Notations}
We use $\log$ to denote the logarithm to the base 2, and use $\ln$ to denote the natural logarithm. We use boldface font to denote (column) vectors or matrices, plain font for scalars, and subscripts for indexing entries of a vector or a matrix. We denote the complex conjugate transpose of the matrix $\bm{A}$ and the transpose of the vector $\bm{\beta}$ by $\bm{A}^{\ast}$ and $\bm{\beta}^{\intercal}$, respectively. We denote the indicator function of a statement $\mathcal{A}$ by $\one\left(\mathcal{A}\right)$. We write $\mathcal{N}(0,\, \sigma^2)$ to denote the (real) Gaussian distribution with zero mean and variance $\sigma^{2}$. We write $\mathcal{CN}(0,\, \sigma^2)$ to denote the circularly-symmetric complex Gaussian distribution with zero mean and variance $\sigma^{2}$. For a positive integer $M$, we use $[M]$ to denote the set $\{1, \ldots, M\}$.

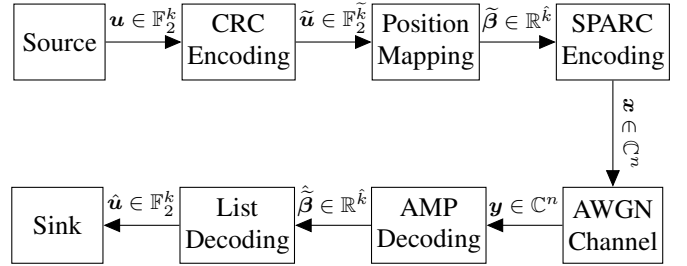
\begin{figure}
\centering 
\scalebox{1.01}{
\begin{tikzpicture}[scale = 0.82, ssnode/.style={draw, rectangle, inner sep = 2pt, minimum width = 1.2cm, minimum height = 1cm},]
\begin{scope}[start chain=going right, node distance=1cm]
\node[ssnode, on chain](v1){Source};
\node[ssnode, on chain, align=center](v2){CRC \\ Encoding};
\draw [->] (v1) -- (v2) node[midway, above, sloped]{\footnotesize $\bm{u} \in \mathbb{F}_2^{k}$};
\node[ssnode, on chain, align=center](v3){Position \\ Mapping};
\draw [->] (v2) -- (v3) node[midway, above, sloped]{\footnotesize $\bm{\widetilde{u}} \in \mathbb{F}_2^{\widetilde{k}}$};
\node[ssnode, on chain,  align=center](v4){SPARC \\ Encoding};
\draw [->] (v3) -- (v4) node[midway, above, sloped]{\footnotesize $\widetilde{\bm{\beta}} \in \mathbb{R}^{\hat{k}}$};
\end{scope}
\begin{scope}[start chain = going below, node distance=1.4cm]
\chainin (v4);
\node[ssnode, on chain,  align=center](v5){AWGN \\ Channel};
\draw[->] (v4) -- (v5) node[midway, above, sloped]{\footnotesize $\bm{x} \in \mathbb{C}^{n}$};
\end{scope}

\begin{scope}[start chain = going left, node distance=1cm]
\chainin (v5);
\node[ssnode, on chain,  align=center](v6){AMP \\ Decoding};
\draw[->] (v5) -- (v6) node[midway, above, sloped]{\footnotesize $\bm{y} \in \mathbb{C}^{n}$};
\node[ssnode, on chain,  align=center](v7){List \\ Decoding};
\draw[->] (v6) -- (v7) node[midway, above, sloped]{\footnotesize $\hat{\widetilde{\bm{\beta}}} \in \mathbb{R}^{\hat{k}}$};
\node[ssnode, on chain,  align=center](v8){Sink};
\draw [->] (v7) -- (v8) node[midway, above, sloped]{\footnotesize $\hat{\bm{u}} \in \mathbb{F}_2^{k}$};
\end{scope}
\end{tikzpicture}
}
\caption{Block diagram of a communication system employing SPARCs combined with CRC codes as outer-error detection codes. The parameters in the block diagram satisfy the following relationships: $k=L\cdot \log{M}, \widetilde{k} = \widetilde{L} \cdot \log{M}, \hat{k} = \widetilde{L} \cdot M$.} \label{diagram_CRC}
\end{figure}

\section{SPARCs and AMP Decoders} \label{backgrounds}
\subsection{Construction and Encoding} \label{Encoding}
Encoding of a SPARC is defined in terms of a design matrix $\bm{A}$ of size $n \times ML$, where $n$ is the block length and where $M$ and $L$ are positive integers that are specified below in terms of $n$ and the rate $R$. In the original construction of SPARCs,  the entries of $\bm{A}$ are assumed to be i.i.d.\ Gaussian $\sim \mathcal{N}(0,\,1/n)\,$. For this paper, the entries of $\bm{A}$ have been extended to be i.i.d.\ Gaussian $\sim \mathcal{CN}(0,\,1/n)\,$ for the complex AWGN channels. 

Codewords are constructed as sparse linear combinations of the columns of $\bm{A}$. Specifically, a codeword $\bm{x} = \left(x_1, \ldots, x_n\right)$ is of the form $\bm{A} \bm{\beta}$, where $\bm{\beta} = \left(\bm{\beta}_{1}^{\intercal}, \ldots, \bm{\beta}_{L}^{\intercal} \right)^{\intercal}$ is an $ML \times 1$ vector and where each section $\bm{\beta}_{\ell} \triangleq \left(\beta_{(\ell-1)\cdot M+1}, \ldots, \beta_{\ell \cdot M}\right)^{\intercal}, \ell=1, \ldots, L$, has the property that only one of its components is non-zero. The non-zero value\footnote{In this paper, we mainly focus on techniques for improving the performance of SPARCs, so we only encode information by the location of non-zero entries in $\bm{\beta}$ with values fixed a priori for simplicity. However, the values (specifically the phases) of the non-zero entries can also be different in the case of complex channels, and the details w.r.t.\ these so-called modulated SPARCs can be found in \cite{hsieh2020modulated}.} of  section $\bm{\beta}_{\ell}$ is set to $\sqrt{nP_{\ell}}$, where $P_{1}, \ldots, P_{L}$ are pre-specified positive constants that satisfy $\sum\limits_{\ell=1}^{L} P_{\ell} = P$. This guarantees that $\frac{1}{n} \sum\limits_{i=1}^{n} \abs{x_i}^2 \leq P$ with high probability. (See Appendix B in the arXiv version \cite{barron2010least} of \cite{joseph2012least} for details.)

Both the matrix $\bm{A}$ and the power allocation $\{ P_{1}, \ldots, P_{L}\}$ are known to the sender and the receiver before transmission. The choice of power allocation plays a crucial role in determining the finite-length performance of SPARCs, which will be illustrated in the following. Without loss of generality, we can choose the power allocation to be non-increasing since messages are independent across sections and the AWGN channels we consider here are memoryless. 

Because the design matrix $\bm{A}$ can be regarded as $L$ blocks with $M$ columns each, and because we pick one column from each block of the design matrix to comprise one codeword, the total number of codewords for a SPARC will be $M^{L}$. With the block length being $n$, the rate $R$ will be $\frac{\log(M^{L})}{n}$, i.e., $\frac{L\cdot \log M}{n}$. In actual implementations, the input bitstream is split into chunks of $\log M$ bits, and each chunk of $\log M$ bits is first bijectively mapped to the position of the non-zero element for the corresponding section. Based on $L$ consecutive chunks, the sender constructs the message vector $\bm{\beta}$ by assigning the pre-specified positive constants to the corresponding positions for each section (see ``position mapping'' in Fig. \ref{diagram_CRC}). Eventually, the sender transmits the codeword $\bm{x} \triangleq \bm{A}\bm{\beta}$ (see ``SPARC encoding'' in Fig. \ref{diagram_CRC}).

\subsection{AMP Decoders for SPARCs over Complex AWGN Channels} \label{Decoding}
For decoding SPARCs over complex AWGN channels, we derive the following AMP decoder, which is similar to the one used for the real case in \cite{rush2017capacity}.

Namely, given the channel output $\bm{y} \triangleq \bm{A}\bm{\beta} + \bm{w}$, where $\bm{w} = \left(w_i\right)_{i \in [n]}$ and $w_i$ are i.i.d.\ $\mathcal{CN}(0,\, \sigma^2)$ for all $i \in [n]$, the AMP decoder will generate successive estimates of the message vector, denoted by $\{ \bm{\beta}^{t}\}, \bm{\beta}^{t} \in \mathbb{R}^{LM}$, for $t =0, 1, 2, \ldots$.

Initialize $\bm{\beta}^{0} \coloneqq  \bm{0}$, then compute
\begin{align}
\bm{z}^{t} &\coloneqq  \bm{y} - \bm{A}\bm{\beta}^{t} + \frac{\bm{z}^{t-1}}{\tau_{t-1}^{2}} \left(P - \frac{\norm{\bm{\beta}^{t}}^2}{n}\right), \\
\beta_{i}^{t+1} &\coloneqq  \eta_{i}^{t}\left(\bm{\beta}^{t} + \bm{A}^{\ast}\bm{z}^{t}\right), \text{ } i = 1, \ldots, ML, \label{denoiser_part}
\end{align}
where the constants $\{\tau_t\}$ and the denoiser functions $\eta_{i}^{t}\left(\cdot\right)$ are defined as follows. First, define 

\begin{align}
\tau_{0}^{2} \coloneqq  \sigma^2 + P, \quad  \tau_{t+1}^2 \coloneqq  \sigma^2 + P\cdot\left(1 -  x\left(\tau_{t}\right)\right), \quad t \geq 0, \label{SE_1}
\end{align}
where
\begin{align}
&x\left(\tau\right) \coloneqq  \nonumber \\ \!&\sum_{\ell=1}^{L}\!\frac{P_{\ell}}{P}\! \cdot \!\mathbb{E}\!\left[\frac{e^{2\cdot\frac{\sqrt{nP_{\ell}}}{\tau}\cdot\left(\Re{U_{1}^{\ell}}+\frac{\sqrt{nP_{\ell}}}{\tau}\right)}}{e^{2\cdot\frac{\sqrt{nP_{\ell}}}{\tau}\!\cdot\!\left(\!\Re{U_{1}^{\ell}}+\frac{\sqrt{nP_{\ell}}}{\tau}\!\right)}\!+\!\sum_{j=2}^{M}e^{2\cdot\frac{\sqrt{nP_{\ell}}}{\tau}\cdot \Re{U_{j}^{\ell}}}}\right] \!\label{SE_2}, 
\end{align}
where $\{U_{j}^{\ell}\}$ are i.i.d.\ $\mathcal{CN}(0,\, 1)$, for all $j \in [M]$ and all $\ell \in [L]$. Note that Equations \eqref{SE_1} and \eqref{SE_2} together are called the state evolution. Finally, for all $\ell \in [L]$ and all $i \in \mathrm{sec}_{\ell} \triangleq \{(\ell-1)\cdot M+1, \ldots, \ell\cdot M\}$, define
\begin{align}
\eta_{i}^{t}\left(\bm{s}\right) \triangleq  \sqrt{nP_\ell} \cdot \frac{e^{2\cdot \Re{s_i}\cdot\frac{\sqrt{nP_\ell}}{\tau_{t}^{2}}}}{\sum_{j \in \mathrm{sec}_{\ell}}e^{2\cdot \Re{s_j}\cdot\frac{\sqrt{nP_\ell}}{\tau_{t}^{2}}}}. \label{denoisers}
\end{align}

The above AMP decoder and its state evolution for SPARCs over complex AWGN channels is the same as those in the real case \cite{rush2017capacity} up to a factor 2 appearing in \eqref{denoisers} and \eqref{SE_2} and a few minor modifications, which is because we can regard transmission of SPARCs with rate $R$ over a complex AWGN channel with a given SNR as two independent (orthogonal) transmissions of SPARCs with rate $\frac{R}{2}$ over real AWGN channels with the same SNR. 

The denoiser functions \eqref{denoisers} for this AMP decoder can be derived from the minimum mean squared error (MMSE) estimator of $\beta_{i}$, for $i \in [ML]$, which is Bayes-optimal under the distributional assumption on $\bm{s} = \bm{\beta} + \tau\bm{u}$ with $\bm{u}$ having i.i.d.\ $\mathcal{CN}(0,\, 1)$ entries and independent with $\bm{\beta}$. I.e., $\eta_{i}^{t}\left(\bm{s}\right)  \triangleq \mathbb{E} \left[\beta_{i} | \bm{s} = \bm{\beta} + \tau\bm{u} \right]$, where the expectation is over $\bm{\beta}$ and $\bm{u}$ given $\bm{s}$.

\subsection{Iterative Power Allocation and Online Estimation of $\tau_t$} \label{iterative_PA_section}
From \eqref{SE_1}, we see that the effective noise variance $\tau_{t}^2$ is the sum of two terms, where the first term is the channel noise variance $\sigma^2$, and the other term $P\cdot\left(1 -  x\left(\tau_{t-1}\right)\right)$ can be regarded as the interference from the undecoded sections in $\bm{\beta}^{t}$. In other words, $ x\left(\tau_{t-1}\right)$ is the expected fraction of sections that are correctly decoded at the end of iteration $t$. In the following, we adapt Lemma 1 in \cite{greig2017techniques}, which gives upper and lower bounds on $x\left(\tau\right)$ for the real AWGN case, to our complex AWGN case. This adaptation requires the redefinition of $\nu_{\ell}$ from \cite[Lemma 1]{greig2017techniques}.

\begin{lemma} \label{bounds_x}
Let $\nu_{\ell} \triangleq \frac{2LP_\ell}{R\tau^2 \ln{2}}$. For sufficiently large $M$, and for any $\delta \in (0, \frac{1}{2})$, 
\begin{align}
x(\tau) &\leq \sum_{\ell=1}^{L}\frac{P_{\ell}}{P}\left[\one\{\nu_{\ell} > 2-\delta\} + M^{-\kappa_{1}\delta^{2}} \one\{\nu_{\ell} \leq 2-\delta\} \right],\\
x(\tau) &\geq \left(1-\frac{M^{-\kappa_{2}\delta^{2}}}{\delta\sqrt{\ln M}}\right)\sum_{\ell=1}^{L}\frac{P_{\ell}}{P}\one\{\nu_{\ell} > 2+\delta\},
\end{align}
where $\kappa_{1}, \kappa_{2}$ are universal positive constants.
\end{lemma}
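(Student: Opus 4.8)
The plan is to reduce the complex-channel expression \eqref{SE_2} to the real-case quantity analyzed in \cite[Lemma 1]{greig2017techniques}, so that the redefinition of $\nu_\ell$ is exactly what is needed to make the two statements coincide term by term. First I would fix a section index $\ell$ and isolate the inner expectation in \eqref{SE_2}; call it $E_\ell$. Writing $c_\ell \triangleq \sqrt{nP_\ell}/\tau$ and dividing numerator and denominator by the numerator, one obtains $E_\ell = \mathbb{E}[\,1/(1+T_\ell)\,]$ with $T_\ell \triangleq \sum_{j=2}^{M} \exp\!\big(2c_\ell(\Re{U_j^{\ell}}-\Re{U_1^{\ell}})-2c_\ell^2\big)$, so that $E_\ell$ is the expected posterior weight on the correct column and $1-E_\ell$ is the decoding-error contribution of section $\ell$.

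Next I would use the defining property of the circularly-symmetric complex Gaussian, namely that the $\Re{U_j^{\ell}}$ are i.i.d.\ $\mathcal{N}(0,1/2)$. Setting $G_j \triangleq \sqrt{2}\,\Re{U_j^{\ell}} \sim \mathcal{N}(0,1)$ folds the factor $2$ in the exponent and the variance $1/2$ into a single effective coefficient: with $a_\ell \triangleq \sqrt{2}\,c_\ell = \sqrt{2nP_\ell}/\tau$ we get $T_\ell = \sum_{j=2}^{M} \exp\!\big(a_\ell(G_j-G_1)-a_\ell^2\big)$, which is formally identical to the object in the real-case analysis, only with a rescaled coefficient. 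The crucial bookkeeping step is then to express $a_\ell^2$ through $M$: using the large-system relation $\log M = nR/L$, i.e.\ $\ln M = (nR/L)\ln 2$, gives $a_\ell^2 = 2nP_\ell/\tau^2 = \nu_\ell \ln M$ with $\nu_\ell = 2LP_\ell/(R\tau^2\ln 2)$ precisely the redefined quantity in the statement. The factor $2$ coming from the complex channel has thus been absorbed into $\nu_\ell$, and the decoding threshold stays at $\nu_\ell = 2$.

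With this identification, each $E_\ell$ is exactly the real-case quantity at effective SNR $a_\ell^2 = \nu_\ell\ln M$, so the per-section bounds of \cite[Lemma 1]{greig2017techniques} apply verbatim; summing over $\ell$ against the weights $P_\ell/P$ and using $\sum_\ell P_\ell = P$ yields the two displayed inequalities. For a self-contained argument I would instead reprove the per-section bounds directly: condition on $G_1$, note that $S_\ell \triangleq \sum_{j=2}^M e^{a_\ell G_j}$ has mean $(M-1)e^{a_\ell^2/2}$ so that $T_\ell$ behaves like $(M-1)e^{-a_\ell G_1 - a_\ell^2/2}$, and observe that $T_\ell$ is small or large according to whether $G_1$ exceeds a threshold of order $\sqrt{\ln M}\,(1-\nu_\ell/2)/\sqrt{\nu_\ell}$, which is negative exactly when $\nu_\ell>2$. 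The upper bound then follows from $1/(1+T_\ell)\le \one\{T_\ell\le\theta\}+1/(1+\theta)$ with $\theta = M^{\kappa_1\delta^2}$, and the lower bound from $T_\ell/(1+T_\ell)\le \one\{T_\ell>\phi\}+\phi$ with $\phi$ polynomially small; the probabilities $\Pr[T_\ell\le\theta]$ and $\Pr[T_\ell>\phi]$ are controlled by Gaussian tail bounds on $G_1$ that produce the $M^{-\kappa_i\delta^2}$ factors, the refined estimate $\Pr[G_1>t]\le e^{-t^2/2}/(t\sqrt{2\pi})$ producing the extra $\delta\sqrt{\ln M}$ in the denominator of the lower bound.

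The main obstacle is not the reduction, which is routine once the variance-$1/2$ observation is made, but controlling the log-normal sum $S_\ell$: its mean and its typical value differ and its upper tail is heavy, so one must show that $S_\ell$ concentrates tightly enough, both above and below its mean, that the clean Gaussian-tail calculation on $G_1$ governs $T_\ell$. Quantifying this deviation uniformly in $\ell$, so that it does not erode the exponents $\kappa_1,\kappa_2$, is the delicate part; in the citation route this is exactly the content that \cite[Lemma 1]{greig2017techniques} supplies, and the redefinition of $\nu_\ell$ is the only place where the complex channel enters.
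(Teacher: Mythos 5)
Your proposal is correct and takes essentially the same route as the paper: the paper's entire proof is a deferral to \cite[Lemma 1]{greig2017techniques} with the redefined $\nu_{\ell}$, which is exactly your citation route. Your explicit reduction --- observing that $\Re{U_j^{\ell}} \sim \mathcal{N}(0,\,1/2)$, rescaling to the effective coefficient $a_\ell = \sqrt{2nP_\ell}/\tau$, and verifying $a_\ell^2 = \nu_\ell \ln M$ via $nR = L\log M$ so that the factor $2$ from the complex channel is absorbed into $\nu_\ell$ --- correctly supplies the bookkeeping that the paper leaves implicit.
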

\begin{proof}
The proof is similar to that of \cite[Lemma 1]{greig2017techniques} and is therefore omitted.
\end{proof}

In the limit $M \to \infty$, we can use the following approximation for $x(\tau)$: 
\begin{align}
x(\tau) \approx \sum_{\ell=1}^{L}\frac{P_{\ell}}{P}\one\bigl\{LP_{\ell} > R\tau^2\ln{2}\bigr\}. \label{approx_power}
\end{align}	
This approximation for $x(\tau)$ gives us a quick way to check whether a given power allocation scheme leads to reliable decoding in the large-system limit. Based on this approximation, we can design an iterative power allocation algorithm as follows. Namely, the $L$ sections of the SPARC are divided into $B$ blocks of $L/B$ sections each, and the same power is allocated to every section within a block. We sequentially allocate the power to all blocks in the following way: for each section within the first block, we allocate the minimum power needed so that all sections within this block can be decoded in the first iteration when $\tau_0^{2} \coloneqq \sigma^2 + P$. Using the approximation \eqref{approx_power}, we assign the power 
\begin{align} 
P_\ell \coloneqq \frac{R\tau_0^2\ln{2}}{L}, \quad 1 \leq \ell \leq \frac{L}{B}, \label{assign_pa}
\end{align}
to each section within the first block, and then we get $\tau_1^{2} \coloneqq \sigma^2 + (P- \frac{L}{B} \cdot P_1)$ due to \eqref{SE_1}. We sequentially allocate the power derived in the same way as \eqref{assign_pa} to the remaining blocks. For $R \leq C$,\footnote{Here, $C \triangleq \log\left(1+\frac{P}{\sigma^2}\right)$ is the channel capacity.} we can readily derive the result that $\sum\limits_{\ell=1}^{L} P_{\ell}$ will be less than the average power $P$. In order to fully allocate the average power $P$, we can slightly modify the algorithm in the following way: for every block, we first compare the average of the remaining available power with the minimum required power computed similarly to \eqref{assign_pa}. If the former one is larger than the latter one, we allocate the remaining power to the remaining sections evenly and terminate the algorithm; otherwise, we assign the minimum required power to each section within the current block and the algorithm continues. The above iterative power allocation scheme for the complex case is straightforwardly extended from the one for the real case proposed in~\cite{greig2017techniques}.

When we consider the finite-length performance of SPARCs, it has been observed that the above iterative power allocation algorithm works better than the exponentially-decaying power allocation which was proved in \cite{rush2017capacity} to be asymptotically capacity-achieving in the large-system limit. Since the above iterative scheme is derived based on the asymptotic approximation for $x(\tau)$ (see Eq.~\eqref{approx_power}), the power allocated by using the above iterative scheme may be slightly different from the non-asymptotic bounds on $x(\tau)$ in Lemma~\ref{bounds_x} at finite lengths. In order to compensate their difference, we can introduce a parameter $R_{\mathrm{PA}}$ that serves as the code rate $R$ in \eqref{assign_pa}; by carefully tuning this parameter to be slightly different from the code rate, we can further improve the finite-length performance. The details of this iterative power allocation scheme for the real case can be found in \cite{greig2017techniques} and all the reasoning can be straightforwardly modified to the complex AWGN channel scenario considered here.

The effective noise variance $\tau_{t}^{2}$ can be estimated via \eqref{SE_1}, \eqref{SE_2} in advance for iteration $t$ up to the maximum iteration $T$,\footnote{The number of iterations can also be determined in advance by running the state evolution until it converges to some fixed point (within a specified tolerance).} but this procedure is extremely time-consuming since we need to estimate $L$ expectations over $M$ random variables via Monte-Carlo simulation for each iteration $t$. Rush \etal\ in \cite{rush2018error} have already shown that $\tau_{t}^{2}$ is concentrated around $\frac{\norm{\bm{z}^{t}}^2}{n}$, i.e., we can estimate 
\[
\hat{\tau}_{t}^{2} = \frac{\norm{\bm{z}^{t}}^2}{n}.
\]
We call this an online estimate of $\tau_{t}^{2}$, and Greig \etal~\cite{greig2017techniques} have already empirically shown that this online estimate provides a good estimate of the noise variance in each iteration as it accurately reflects how the decoding is progressing in that simulation run.

\section{An alternative class of design matrices for SPARCs over Complex AWGN Channels} \label{Alternative_Matrix_Section}
In theoretical analyses of SPARCs, the entries of the design matrix $A$ are i.i.d.\ samples from a normal distribution with zero mean. With such a matrix, the computational complexity of matrix-vector multiplication\footnote{The matrix-vector multiplications we consider in this paper include $\bm{A} \bm{\beta}$ and $\bm{A}^{\ast}\bm{z}$, where the former multiplication is used for both encoding and decoding, whereas the latter multiplication is only used for decoding.} is $O(LMn)$ and the space complexity is prohibitive for reasonable code lengths. 

In order to reduce these complexities, most previous papers used a class of design matrices based on DFT matrices (subsequently denoted by $A_\mathrm{D}$) to replace the original class of design matrices.
% \footnote{In the real case, the commonly-used design matrix is based on a Hadamard matrix.}
(Note that the commonly-used design matrix is based on a Hadamard matrix in the real case.) In this paper, we will introduce an alternative design-matrix construction based on circulant matrices (subsequently denoted by $A_\mathrm{C}$) whose leading rows are derived from Frank or from Milewski sequences.
% \footnote{We can also introduce a circulant matrix with a leading row based on the well-known m-sequences for the real case.} 
(Note that for the real case, the commonly-used design matrix is based on a Hadamard matrix and we can also introduce a circulant matrix with a leading row based on the well-known m-sequences.)
In the following, we will first illustrate the DFT-based design-matrix construction and then show how to construct the novel class of design matrices based on circulant matrices in detail.

Before introducing these classes of design matrices, we need to discuss the basic criteria for picking such design matrices. Note that properties of the original class of design matrices discussed in Section~\ref{Encoding} include near orthogonality between pairs of columns, near orthogonality between pairs of rows, row sums close to zero, and column sums close to zero. Mimicking these properties of the original design matrices, we would like to construct matrices such that pairs of columns are as orthogonal as possible, and also such that the column sums and the row sums are close to zero. 

A first DFT-matrix based approach to construct design matrices is based on taking a random subset of rows of a single large DFT matrix.\footnote{Note that the first all-one row and the first all-one column need to be removed from the DFT matrix because they do not satisfy the requirements of column sums and row sums being close to zero.} One readily sees that DFT matrices satisfy the above requirements. Let us discuss the advantages of such a class of design matrices based on DFT matrices over the original class of design matrices regarding complexities. Since the class of design matrices based on DFT matrices is constructed via picking a row-permuted DFT matrix with suitable size, the only information we need to store is this row permutation, which significantly reduces the memory required. Besides that, we can perform the matrix-vector multiplications via a fast Fourier transform (FFT) to greatly reduce the computational complexity. However, the matrix-vector multiplication directly based on such a large DFT matrix is very inefficient because $ML$ is usually far greater than $n$. (Recall that the design matrix has size $n\times ML$.) 

\begin{figure*}[h]
\normalsize
\setcounter{equation}{9}
\begin{align}
A_{\mathrm{C}}  \triangleq 
\left[
\renewcommand{\arraystretch}{2}
\begin{array}{c:c:c:c:c}
\pi_{1,1}(\phi_{1}\phi'_{1}C) & \pi_{1,2}(\phi_{1}\phi'_{2}C) & \quad \cdots \quad  & \pi_{1,L-1}(\phi_{1}\phi'_{L-1}C) & \pi_{1,L}(\phi_{1}\phi'_{L}C)
\\ \hdashline[2pt/2pt]
\pi_{2,1}(\phi_{2}\phi'_{1}C) & \pi_{2,2}(\phi_{2}\phi'_{2}C) & \cdots & \pi_{2,L-1}(\phi_{2}\phi'_{L-1}C) & \pi_{2,L}(\phi_{2}\phi'_{L}C)
\\ \hdashline[2pt/2pt]
\vdots & \vdots & \ddots & \vdots & \vdots
\\ \hdashline[2pt/2pt]
\pi_{L_{\mathrm{BR}},1}(\phi_{L_{\mathrm{BR}}}\phi'_{1}C) & \pi_{L_{\mathrm{BR}},2}(\phi_{L_{\mathrm{BR}}}\phi'_{2}C) & \cdots & \pi_{L_{\mathrm{BR}},L-1}(\phi_{L_{\mathrm{BR}}}\phi'_{L-1}C) & \pi_{L_{\mathrm{BR}},L}(\phi_{L_{\mathrm{BR}}}\phi'_{L}C)
\end{array}
\right],  \label{circulant_matrix}
\end{align}

\hrulefill
\vspace*{4pt}
\end{figure*}

A second DFT-matrix based approach to construct design matrices is based on defining 
\[
A_{\mathrm{D}} \triangleq
\left[
\begin{array}{c;{2pt/2pt}c;{2pt/2pt}c}
A_{\mathrm{D}_{1}} & \cdots & A_{\mathrm{D}_{L}}
\end{array}
\right],
\]
where for $i \in [L]$, $A_{\mathrm{D}_{i}}$ is a (truncated) row-permuted version of the DFT matrix with suitable size. Importantly, the row permutations are taken independently. Note that the size of $A_{\mathrm{D}_{i}}$ is $n \times M$ for all $i \in [L]$. However, a drawback of this DFT-matrix based approach is that, in order to have radix-2 fast Fourier transforms available, we need to start with a DFT matrix of size $2^k \times 2^k$ instead of size $\max(n+1, M+1) \times \max(n+1, M+1)$, where $k = \ceil{\log(\max(n+1, M+1))}$,\footnote{The $n+1$ and $M+1$ instead of $n$ and $M$ in the definition of $k$ come from removing the first all-one row and removing first all-one column in the DFT matrix, respectively.} and this results in the growth of the encoding and the decoding computational complexity.

Although the class of design matrices based on DFT matrices works well in terms of efficiency and performance, it lacks flexibility since DFT matrices are fixed. It is better to have more degrees of freedom for the design matrix when designing SPARCs for various applications, as these degrees of freedom allow one to satisfy other desirable constraints. Next, we will introduce an alternative design-matrix construction based on circulant matrices to reach such flexibility. Namely, we choose the design matrix $A_\mathrm{C}$ to be  an $L_{\mathrm{BR}} \times L$ array (where $L_{\mathrm{BR}} \triangleq \ceil{\frac{n}{M}}$ and where $L$ is an even number) of row-permuted circulant matrices of size $M \times M$ of the form shown in \eqref{circulant_matrix} at the top of the page, and all the parameters are introduced as follows: $\pi_{j,i}$ is a row permutation for $ j \in [L_{\mathrm{BR}}], i \in [L]$, $C$ is a circulant matrix with a carefully-chosen leading row; $\{\phi_{j} \in \mathbb{C}\ | \ j \in [L_{BR}]\}$ should be chosen to satisfy $\sum\limits_{j=1}^{L_{BR}} \phi_{j} = 0$ and also $\abs{\phi_j} =1$ for $j \in [L_{\mathrm{BR}}]$. For example, $\{\phi_{j} \in \mathbb{C}\ | \ j \in [L_{BR}]\}$ can be chosen as the $L_{BR}$-roots of unity. In the same way, $\{\phi'_{i} \in \mathbb{C}\ | \ i \in [L]\}$ should be chosen to satisfy $\sum\limits_{i=1}^{L} \phi'_{i} = 0$ and also $\abs{\phi'_i} =1$ for $i \in [L]$. For example, $\{\phi'_{i} \in \mathbb{C}\ | \ i \in [L]\}$ is chosen as $\{\phi'_{i} = (-1)^{i-1} \ | \ i \in [L]\}$ in our simulations (see Fig.~\ref{Differ_Matrices}). The main goal of the above construction is to get a design matrix with zero row sums, zero column sums and near orthogonality between pairs of columns. The multiplication of such a circulant-based design matrix by a vector can be done efficiently with the suitable use of FFTs, inverse FFTs (IFFTs), and permutations. Note that the complexities with respect to storage and computation are comparable with the class of design matrices based on DFT matrices, but this novel class of design matrices based on circulant matrices can provide more degrees of freedoms.

Let us discuss the leading row of the circulant matrix $C$ in more detail. The requirement for the leading row is to make the correlation between it and its cyclically shifted versions as low as possible. For sequences over complex numbers, there are so-called ``perfect sequences'' that have all of their nontrivial periodic autocorrelations equal to zero. Particular examples include Frank sequences~\cite{frank1962phase} and Milewski sequences~\cite{milewski1983periodic}. Let us give the definitions of these two different sequences in the following.

\begin{definition}[Frank sequences]
Let $d$ be a positive integer. A Frank sequence $\bm{\theta}$ of length $d^2$ is defined by 
\[
\theta_{j+kd+1} \triangleq \exp\left(\frac{2\pi i jk}{d}\right),
\]
where $j$ and $k$ are integers satisfying $0 \leq j,\! k < d$, and $i$ is the imaginary unit.
\end{definition}

\begin{definition}[Milewski sequences]
Let $d$ be a positive integer and let $h$ be a non-negative integer. A Milewski sequence $\bm{\theta}$ of length $d^{2h+1}$ is defined by 
\[
\theta_{j+kd^h+1} \triangleq 
\begin{cases}
	\exp\left(\frac{\pi ik \left(2j+kd^h\right)}{d^{h+1}}\right) & \text{for even } d \\
	\exp\left(\frac{\pi ik \left(2j+\left(k+1\right)d^h\right)}{d^{h+1}}\right) & \text{for odd } d
\end{cases},
\]
where $j$ and $k$ are integers satisfying $0 \leq j < d^h$ and $0 \leq k < d^{h+1}$, and $i$ is the imaginary unit.
\end{definition}

The above two families of sequences have already been proven to be perfect in \cite{frank1962phase, milewski1983periodic}. Taking one perfect sequence with length $M$ as the leading row, we can construct a circulant matrix with uncorrelated rows as well as columns. The resulting circulant matrix is the circulant matrix $C$ used to construct the right-hand side of \eqref{circulant_matrix}.\footnote{This construction is similar to Gallager's construction of regular LDPC codes in \cite{gallager1962low}.} 

\begin{figure}[t!]
\centering
\begin{subfigure}{\linewidth}
\centering
\includegraphics[scale=0.18]{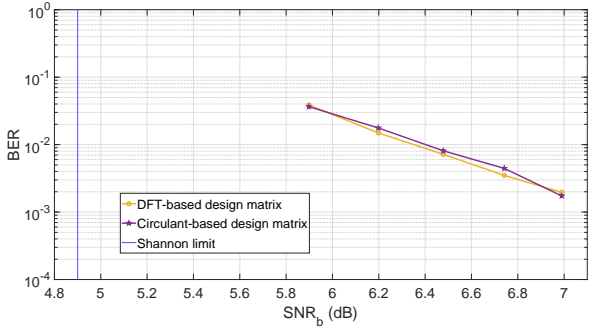}
\caption{BER performance comparison of SPARCs with $L=1024, M=512, R=1.8$ bits/(channel use)/dimension over a complex AWGN channel.} 
\end{subfigure}
\hfill
\begin{subfigure}{\linewidth}
\centering
\includegraphics[scale=0.18]{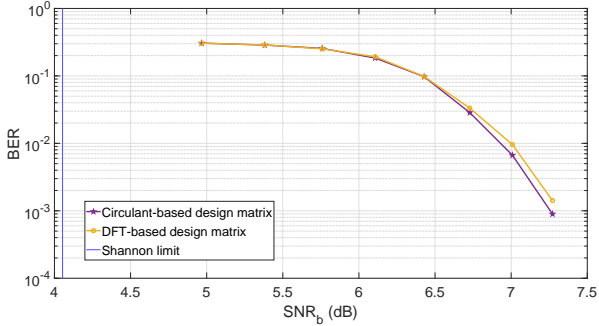}
\caption{BER performance comparison of $\left(w=6, L_{C}=32\right)$ SC-SPARCs with $L=960, M=128$ over a complex AWGN channel.} 
\end{subfigure}
\caption{BER performance comparison of (SC-)SPARCs with different design matrices.}  \label{Differ_Matrices}
\end{figure}

Simulation results (see Fig. \ref{Differ_Matrices}) for the two discussed classes of design matrices show that they lead to comparable performances, with the class of design matrices based on circulant matrices being slightly better for the SC-SPARCs. (SC-SPARCs will be introduced in Section~\ref{Extention_SC}.) Besides these comparable performances, the class of design matrices based on circulant matrices~\eqref{circulant_matrix} has more degrees of freedom in the sense that we can choose any $\{\phi_{j}\in \mathbb{C} \ | \  j \in [L_{BR}]\}$ satisfying $\sum\limits_{j=1}^{L_{BR}} \phi_{j} = 0$ and $\abs{\phi_j} =1$ for $j \in [L_{BR}]$ and also choose any $\{\phi'_{i}\in \mathbb{C} \ | \  i \in [L]\}$ satisfying $\sum\limits_{i=1}^{L} \phi'_{i} = 0$ and $\abs{\phi'_i} =1$ for $i \in [L]$. This allows one to satisfy other desirable constraints when designing SPARCs for various applications, e.g., the magnetic recording scenario.

For example, in magnetic recording systems, a characteristic of digital recording channels is that they suppress the low frequency components of the recorded data, thus codes are required to have large rejection of low frequency components (see \cite[Chapter 19]{vasic2004coding} for details). Such codes are called ``spectrum shaping codes''. Based on the structure of our circulant-based design matrix, we can easily verify that $\sum\limits_{i=1}^{n} x_i = 0$ where $\bmx = A_\mathrm{C}\cdot\bm{\beta}$. Such a code is called DC-free and belongs to the class of spectrum shaping codes. 

% Since $\{\phi_{j}\in \mathbb{C} \ | \  j \in [L_{BR}]\}$ and $\{\phi'_{i}\in \mathbb{C} \ | \  i \in [L]\}$ can be chosen freely with only the zero-sum and unity constraints, the clever choice of them might give us extra linear constraints, e.g., 
\section{Using List Decoding to improve the Performance of SPARCs} \label{List_Decoding_Section}
While running simulations for evaluating the performance of the AMP decoder for original SPARCs, we noticed that a significant amount of wrongly decoded sections can be decoded successfully if we choose the second or third most likely location based on the output of the AMP decoder. This observation inspired us to consider list decoding, i.e., instead of outputting one location for each section, the decoding algorithm should output $S$ candidates for each section based on the estimated vector $\bm{\beta}^{(T)}$ from the AMP decoder, where $S$ is the size of the output list and can be pre-specified. In order to improve the performance of this list decoder, we use an outer code that mainly serves as an error detection code. Details are explained in the following subsection.

\subsection{Encoding and Decoding SPARCs with CRC codes} \label{CRC_encoding_decoding}
The key part of our proposed concatenated coding scheme is using CRC codes as outer codes. There are two ways to employ CRC codes: (a) an ``inter-section-based approach" that encodes the original message to generate extra check sections bit by bit when we focus on the bit error rate (BER) performance, (b) an ``intra-section-based approach" that encodes them section by section when we focus on the section error rate (SecER) performance.  We will only discuss the former way since the latter way is essentially the same.\footnote{The only difference between these two SPARCs with CRC codes is the CRC encoding part, and there is no need to discuss the section-by-section encoding separately.} The block diagram in Fig.~\ref{diagram_CRC} gives a high-level view of the proposed concatenated coding scheme with SPARCs as inner codes and CRC codes as outer codes, where CRC codes with suitable parameters were taken from \cite{koopman2004cyclic}. In the following, we discuss the main blocks of Fig.~\ref{diagram_CRC}.

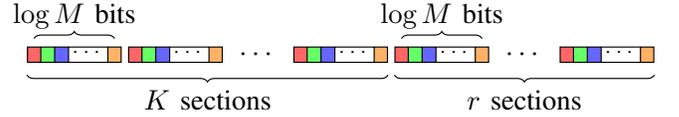
\begin{figure}
\centering
\scalebox{1.07}
{
\begin{tikzpicture}[start chain=going right,node distance=.5ex]
  \tikzset{
    simple node/.style={
      draw,
      text height=0.8ex,text depth=0.4ex,
      inner sep=0pt,text width=1ex,align=center
    },
    split node/.style={
      simple node,
      rectangle split,rectangle split horizontal,rectangle split parts=5,
	rectangle split part fill={red!60, green!60, blue!60, white!60, orange!60},
      draw,inner sep=0ex,rectangle split part align=base,
    },
  }
  \node[on chain,split node] (v1){\nodepart{two}\nodepart{three}\nodepart[text width=3ex]{four}{\scriptsize $\cdots$} \nodepart{five}};
  \node[on chain,split node] {\nodepart{two}\nodepart{three}\nodepart[text width=3ex]{four} {\scriptsize $\cdots$} \nodepart{five}};
  \node[on chain] {$\cdots$} ;
  \node[on chain,split node] (v2){\nodepart{two}\nodepart{three}\nodepart[text width=3ex]{four} {\scriptsize $\cdots$} \nodepart{five}};
   \draw[decorate,decoration={brace}] ($(v1)+(-0.5,0.2)$) --
        node[above]{$\log{M}$ bits} ++(1,0) ; 
   \path (v1.south west) 
		edge[decorate,decoration={brace,mirror,raise=.15cm},"$K$ sections" below=6pt](v1.south west -| v2.south east);
  \node[on chain,split node] (v3){\nodepart{two}\nodepart{three}\nodepart[text width=3ex]{four}{\scriptsize $\cdots$} \nodepart{five}};
  \node[on chain] {$\cdots$} ;
  \node[on chain,split node] (v4){\nodepart{two}\nodepart{three}\nodepart[text width=3ex]{four} {\scriptsize $\cdots$} \nodepart{five}};
   \draw[decorate,decoration={brace}] ($(v3)+(-0.5,0.2)$) --
        node[above]{$\log{M}$ bits} ++(1,0) ; 
   \path (v3.south west) 
		edge[decorate,decoration={brace,mirror,raise=.15cm},"$r$ sections" below=6pt](v3.south west -| v4.south east);
\end{tikzpicture}
}
\caption{This diagram illustrates how to encode $K$ sections of information bits to generate $r$ sections of check bits. More specifically, we encode $K$ information bits with the same color to generate $r$ check bits with the same color via CRC encoding.} \label{CRC_bits}
\end{figure}

(\textbf{CRC Encoding}) We first encode the original message $\bm{u}$ with size $L\cdot  \log{M}$ bit by bit in the following way: 
\begin{enumerate}
	\item Partition the original message $\bm{u}$ into $L$ sections, denoted by $\{\bm{u}_1, \ldots, \bm{u}_L\}$. These $L$ sections are organized into $\frac{L}{K} \left(\triangleq N_{\mathrm{g}}\right)$ groups of $K$ sections each.\footnote{For simplicity, we assume that $L$ is divisible by $K$. For the case that $L$ is not divisible by $K$, we can take the last mod$\left(L, K\right)$ sections as one group, and encode this group individually.} More precisely, these $N_{\mathrm{g}}$ groups will be $\{\bm{u}_1, \bm{u}_{N_{\mathrm{g}}+1},\ldots, \bm{u}_{L-N_{\mathrm{g}}+1}\}$, $\ldots$, $\{\bm{u}_j, \bm{u}_{N_{\mathrm{g}}+j}, \ldots, \bm{u}_{L-N_{\mathrm{g}}+j}\}$, $\ldots$, $\{\bm{u}_{N_{\mathrm{g}}}, \bm{u}_{2N_{\mathrm{g}}}, \ldots, \bm{u}_{L}\}$.
	\item Use the CRC code to systematically encode each group of $K$ sections bit by bit as shown in Fig.~\ref{CRC_bits} to generate $r$ extra sections that are appended at the end of the original message; for instance, the group $\{\bm{u}_j, \bm{u}_{N_{\mathrm{g}}+j}, \ldots, \bm{u}_{L-N_{\mathrm{g}}+j}\}$ will generate extra sections $\{  \widetilde{\bm{u}}_{L+j}, \widetilde{\bm{u}}_{L+N_{\mathrm{g}}+j}, \ldots, \widetilde{\bm{u}}_{L+(r-1) \cdot N_{\mathrm{g}}+j} \}$.
\end{enumerate}

This CRC encoding procedure yields $L + N_{\mathrm{g}} \cdot r\ \bigl(\triangleq \widetilde{L}\bigr)$ sections. One can readily see that $K$ is the number of information bits and $r$ is the number of check bits for each codeword of CRC encoding, and $N_{\mathrm{g}} \cdot \log{M} \bigl(\triangleq N_{\mathrm{C}}\bigr)$ is the total number of codewords; the resulting codewords are denoted by $\{\mathrm{C}_i | i \in [N_{\mathrm{C}}]\}$. The number of additional parity bits $N_{\mathrm{C}}\cdot r$ are the cost for error detection, which leads to a trade-off between error detection capability and rate loss.

(\textbf{Position Mapping} and \textbf{SPARC Encoding}) We first map the encoded message to the corresponding $\widetilde{\bm{\beta}}$ and then perform SPARC encoding by multiplying the design matrix with $\widetilde{\bm{\beta}}$ to get the corresponding codeword $\bm{x}$, which is transmitted over the complex AWGN channel. 

(\textbf{AMP Decoding} and \textbf{List Decoding}) At the receiver side, we can decode the received message in the following way:
\begin{enumerate}
	\item Perform $T$ iterations of AMP decoding;
% 	\footnote{Here $T$ stands for the maximum number of iterations we perform during AMP decoding (see \cite[Section \rom{4}]{greig2017techniques} for details).} 
	the resulting estimate of $\widetilde{\bm{\beta}}$ is called $\widetilde{\bm{\beta}}^{(T)}$.
	\item For each section $\ell \in \bigl[ \widetilde{L}\bigr]$, normalizing $\widetilde{\bm{\beta}}^{(T)}_{\ell}$ gives the a posterior distribution estimate of the location of the non-zero entry of $\widetilde{\bm{\beta}}_{\ell}$, denoted by $\hat{\tilde{\boldsymbol{\beta}}}^{(T)}_{\ell}$.
	\item  For each section $\ell \in \bigl[ \widetilde{L}\bigr]$, convert the posterior distribution estimate $\hat{\tilde{\boldsymbol{\beta}}}^{(T)}_{\ell}$ into $\log_2{M}$ bit-wise posterior distribution estimates according to Algorithm~\ref{convertSec2Bits}, which is essentially the same as ``Algorithm 2" in \cite{greig2017techniques}.
	\item For each codeword $\mathrm{C}_i$, we establish a binary tree of depth $K+r$, where, starting at the root, at each layer, we keep  at most $S$ branches, which are the most likely ones.
	\item For each codeword $\mathrm{C}_i$, once we have established such a binary tree, list decoding will give us $S$ ordered candidates corresponding to the remaining $S$ paths from the root to the leaves. For each path (i.e., a candidate with $K+r$ bits) from the most likely one to the least likely one, we detect whether this candidate is valid or not via CRC detection and take the candidate as $\hat{\mathrm{C}}_i$ once the CRC condition is satisfied.\footnote{Although we have already found the most likely valid codeword, it might be an ``undetected error" in the sense that the codeword is different from what we transmitted. This is the worst case since we cannot even realize that we made an error.} If no candidate satisfies the CRC condition after checking all candidates, it means that there is a ``detected error" and we take the first candidate as $\hat{\mathrm{C}}_i$. Although at least one bit is decoded incorrectly, this first candidate can be potentially better than other candidates.
\end{enumerate}

\begin{algorithm}[t]
\caption{Conversion of the section-wise posterior distribution estimate into the bit-wise posterior distribution estimate\\  (using Matlab notation) }
\label{convertSec2Bits}
\begin{algorithmic}
\REQUIRE section-wise posterior distribution estimate $\hat{\bm{\beta}}_{\mathrm{sec}} = \bigl(\hat{\beta}_1, \ldots, \hat{\beta}_M\bigr)$.
\ENSURE the probability of each bit being $0$ $\{\mathrm{b}_i, \forall i \in \left[\log{M}\right]\}$.
\STATE  $M_{\text{bits}} \leftarrow \log{M}$
\STATE  $\text{pos} \leftarrow 1 : M$
\FOR{\text{bits $\leftarrow 1 : M_{\text{bits}}$}}
	\STATE $n_{c} \leftarrow 2^{\text{bits}-1}$
	\STATE $\text{pos}_{r} \leftarrow$ reshape$\bigl(\text{pos}, n_{c}, \frac{M}{n_{c}}\bigr)$
	\STATE $\text{pos}_{0} \leftarrow$ reshape$\bigl(\text{pos}_{r}\left(\,: \, , 1:2:\text{end}\right), 1, \frac{M}{2}\bigr)$ 
	\STATE $b_{\text{bits}} \leftarrow$ sum$\bigl(\hat{\bm{\beta}}_{\mathrm{sec}}\bigl(\text{pos}_{0}\bigr)\bigr)$
\ENDFOR

\RETURN $\{b_{i}, \forall i \in \left[M_{\text{bits}}\right]\}$

\end{algorithmic}
\end{algorithm}

Besides the above regular list decoding assisted with CRC codes, we can try to further improve the performance by running AMP and list decoding again (denoted by ``\textbf{AMP again}") for parts of the message, especially when most of the errors stemming from list decoding are detected errors. More specifically, we can apply the following procedure:
\begin{enumerate}
	\item Run AMP decoding as before, except that at each iteration, fix the ``correctly decoded''\footnote{It may contain wrongly decoded sections which were regarded as correct ones; we call these errors ``undetected errors''. (See also Footnote 10.)} parts of the message and only estimate the other sections. When the maximum number of iteration $T$ is reached or some halting condition is satisfied, the algorithm outputs $\widetilde{\bm{\beta}}^{\ast}$.
	\item Take only the wrongly decoded sections of $\widetilde{\bm{\beta}}^{\ast}$, denoted by $\widetilde{\bm{\beta}}^{\ast}_{\mathrm{WD}}$, and apply the above list decoding procedure to $\widetilde{\bm{\beta}}^{\ast}_{\mathrm{WD}}$, which gives the decoded message.
\end{enumerate} 

\subsection{Numerical Simulations} \label{simulation_SPARCs_PA}
In this subsection, we evaluate the BER performance of SPARCs concatenated with CRC codes and SPARCs without CRC codes over the complex AWGN channel for different overall rates. For SPARCs concatenated with CRC codes, we consider the finite-length performance of list decoding with different list sizes as well. For complex AWGN channels, the SNR per information bit, i.e., $\mathrm{SNR}_{\mathrm{b}}$, is defined as $\frac{P}{\sigma^2}\cdot \frac{1}{R}$.

The setup for the simulation results in Fig. \ref{low_rate} is as follows. We consider SPARCs with overall rate $R=0.8$ bits/(channel use)/dimension, in which case $R_{\mathrm{PA}}=0$ and the iterative power allocation scheme gives a flat allocation (see Section~\ref{iterative_PA_section} for details). Moreover, 
\begin{itemize}
\item we choose the number of information sections $L$ to be 1000,
\item we choose the size of each section $M$ to be 512,
\item we choose the number of information bits $K$ to be 100,
\item we use the 8-bit CRC code whose generator polynomial is 0x97$=x^8+x^5+x^3+x^2+x+1$ (see, e.g., \cite{koopman2004cyclic}).
\end{itemize}

\begin{figure}[t!]
\centering
\begin{subfigure}{\linewidth}
\centering
\includegraphics[scale=0.25]{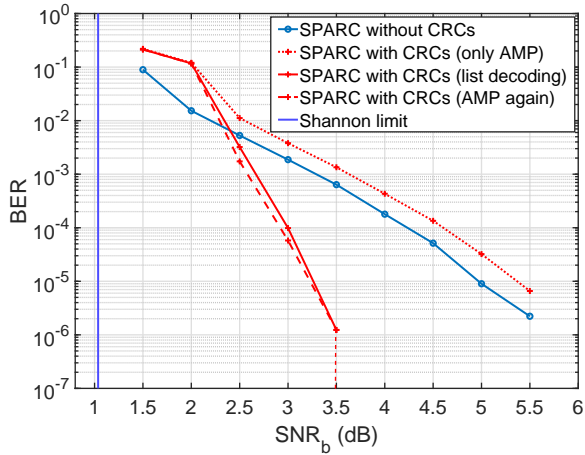}
\caption{BER performance comparison of SPARCs with CRC codes using list decoding and original SPARCs without CRC codes using only AMP. Besides that, ``AMP again'' is also included.}  \label{SPARCs_ListD_low_rate}
\end{subfigure}
\hfill
\begin{subfigure}{\linewidth}
\centering
\includegraphics[scale=0.19]{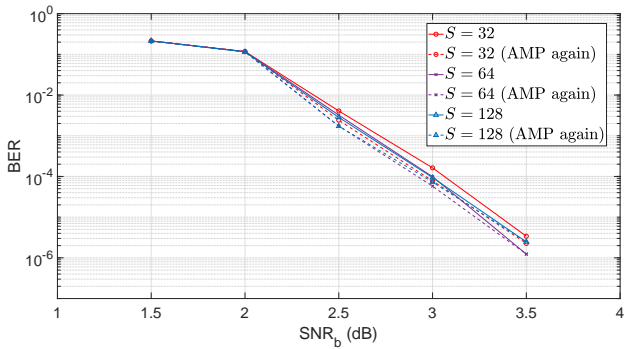}
\caption{BER performance comparison of SPARCs with CRC codes for different list sizes. The solid lines show performance results for one round of list decoding, whilst the dashed lines show performance results of ``AMP again".  } \label{differListSize_low_rate}
\end{subfigure}
\caption{BER performance comparison of SPARCs with overall rate $R=0.8$ bits/(channel use)/dimension.}  \label{low_rate}
\end{figure}

The difference between Fig. \ref{SPARCs_ListD_low_rate} and Fig. \ref{differListSize_low_rate} is as follows. Namely, Fig. \ref{SPARCs_ListD_low_rate} considers the list decoder with the list size $S=64$, whereas Fig. \ref{differListSize_low_rate} considers the performance of SPARCs concatenated with CRC codes under list decoding with different list sizes. The plot in Fig. \ref{SPARCs_ListD_low_rate} shows that SPARCs concatenated with CRC codes can provide a steep waterfall-like behavior\footnote{Note that the dashed vertical line in Fig. \ref{SPARCs_ListD_low_rate} indicates that no error was observed in $10^3$ simulation runs.} above a threshold of $\mathrm{SNR}_{\mathrm{b}} = 3.5$ dB. Besides this, the proposed concatenated coding scheme can also provide a boost compared with the original SPARCs below this threshold. This is different from SPARCs with LDPC codes as outer codes that were considered in \cite{greig2017techniques}, since the performance of that concatenated coding scheme is dramatically worse than the original SPARCs below the threshold. Note that in the case of concatenation with LDPC codes \cite{greig2017techniques}, the ``AMP again" procedure improves the performance significantly. However, in our case, running  ``AMP decoder again" usually yields rather limited improvements. We can analyze this as follows.
\begin{itemize}
\item When $\mathrm{SNR}_{\mathrm{b}} \lesssim 2\text{dB}$, the AMP decoder can barely decode a few sections and the number of errors are beyond the CRC code's error detection capability; therefore, the output of list decoding may contain several ``undetected errors", which misleads further decoding in the ``AMP again" part.
\item When  $2\text{dB} \lesssim \mathrm{SNR}_{\mathrm{b}} \lesssim 3.5\text{dB}$, the outer error-detection code improves the performance and results in very few ``undetected errors" since parts of the received message with few errors are within the CRC code's error detection capability. Because there are very few ``undetected errors", we can further improve the performance by runing the ``AMP again" procedure. 
\item When $\mathrm{SNR}_{\mathrm{b}} \gtrsim 3.5$dB, almost all the errors are corrected when performing list decoding and there are very few errors remaining; therefore, usually there is no noticeable improvement obtained by running ``AMP again".
\end{itemize}

Fig. \ref{differListSize_low_rate} illustrates how the list size affects the performance of SPARCs with CRC codes. From this plot we deduce that $S = 64$ is the best choice for the considered setup. This can be explained as follows. The list size cannot be too small, otherwise we cannot find a candidate satisfying the CRC condition, and the resulting estimate will be the same as the estimate deduced from the original AMP decoder. However, the list size also cannot be too large since it may result in a few ``undetected errors", which will mislead the ``AMP again" part.

\begin{figure}[t!]
\centering
\includegraphics[scale=0.25]{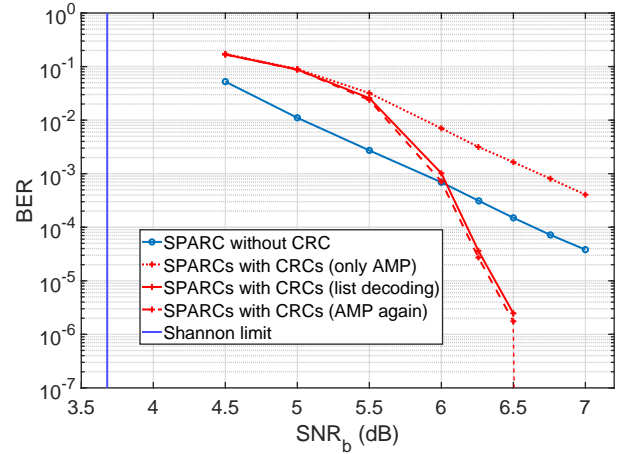}
\caption{BER performance comparison of SPARCs with CRC codes using list decoding and original SPARCs without CRC codes using only AMP. Besides that, ``AMP again'' is also included.}  \label{high_rate}
\end{figure}

The setup of the simulation results in Fig.~\ref{high_rate} is the same as for Fig.~\ref{low_rate}, except that we consider SPARCs with overall rate $R=1.5$ bits/(channel use)/dimension and $R_{\mathrm{PA}} = 3 \times 1.1 = 3.3$ $(\approx 3)$ in this case. Unsurprisingly, the plot in Fig.~\ref{high_rate} exhibits the same waterfall-like behavior\footnote{Note that the dashed vertical line in Fig.~\ref{high_rate} indicates that no error was observed in $10^3$ simulation runs.} as in Fig.~\ref{SPARCs_ListD_low_rate}. Based on the non-increasing power allocation assumption, we know that all errors appear in the last few sections and may even be consecutive. Such consecutive errors can typically be dealt with successfully thanks to the interleaving feature of the grouping scheme proposed in Section \ref{CRC_encoding_decoding}. Whereas the original SPARC is very sensitive with respect to the choice of $R_{\mathrm{PA}}$, our coding scheme turns out to be relatively robust to the choice of $R_{\mathrm{PA}}$ due to our choice of outer error-detection code and decoding scheme. Therefore, we tackled the sensitivity issue of the iterative power allocation scheme to a great extent thanks to our concatenated coding scheme.

\begin{figure}[t!]
\centering
\begin{subfigure}{\linewidth}
\centering
\includegraphics[scale=0.19]{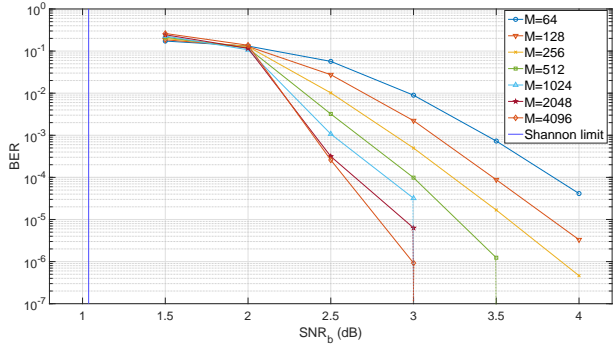}
\caption{BER performance comparison of SPARCs concatenated with CRC codes using list decoding when the overall rate is $R=0.8$ bits/(channel use)/dimension.}  \label{SPARCs_differM_low_rate}
\end{subfigure}
\hfill
\begin{subfigure}{\linewidth}
\centering
\includegraphics[scale=0.19]{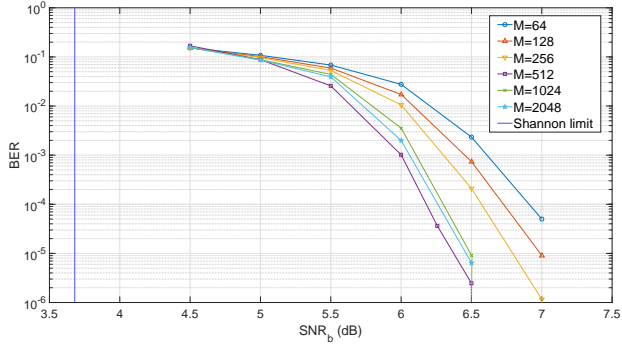}
\caption{BER performance comparison of SPARCs concatenated with CRC codes using list decoding when the overall rate is $R=1.5$ bits/(channel use)/dimension.} \label{SPARCs_differM_high_rate}
\end{subfigure}
\caption{BER performance comparison of SPARCs concatenated with CRC codes using list decoding for different section sizes $M$.}  \label{SPARCs_differM}
\end{figure}

The simulation results regarding the prediction for the {SecER}\footnote{The BER performance should be roughly half of the SecER performance since the position of the non-zero value is chosen uniformly for each section.} via state evolution in~\cite{greig2017techniques} shows that the performance gets better when the section size $M$ gets larger. In the following, we investigate how the section size $M$ affect the BER performance of our concatenated coding scheme. The setup of the simulation results in Fig.~\ref{SPARCs_differM} are the same as for the previous ones except that we consider how the section size $M$ will affect the BER performance of our concatenated coding scheme. Specifically, Fig.~\ref{SPARCs_differM_low_rate} and Fig.~\ref{SPARCs_differM_high_rate} illustrate the BER performance comparison of SPARCs concatenated with CRC codes using list decoding when the overall rate is $R=0.8$ bits/(channel use)/dimension and when the overall rate is $R=1.5$ bits/(channel use)/dimension, respectively. When the overall rate is $R=0.8$ bits/(channel use)/dimension, Fig.~\ref{SPARCs_differM_low_rate} shows that the BER performance becomes better when the section size $M$ gets larger; especially, when the section size $M$ is larger than or equal to $1024$, these SPARCs concatenated with CRC codes have a better threshold. This result coincides with the prediction. However, when the overall rate is $R=1.5$ bits/(channel use)/dimension, Fig.~\ref{SPARCs_differM_high_rate} shows that there is a ``best'' section size, i.e., $M^{*}=512$; in other words, the BER performance of our concatenated coding scheme with the section size $M^{*}=512$ is the best for section sizes from $M=2^6$ to $M=2^{11}$. This result can be explained as follows; in the context of high-rate SPARCs (say, the overall rate $R > 1$), as $M$ goes beyond the ``best'' $M^{*}$, the further benefit from the extra increase in $M$ is weaker than the loss of concentration on the prediction via state evolution. The effect of the section size $M$ on concentration is still theoretically unclear and the prediction of the ``best'' section size $M^{*}$ even for the original SPARC is still an open problem.

After discussing the simulation results, let us discuss the encoding and decoding complexities regarding our concatenated coding scheme in the following. The encoding complexity is mainly dominated by the inner encoding, i.e., one vector-matrix multiplication, which can be done efficiently via one FFT. The complexity of one FFT is $O(LM\log(LM))$, thus the encoding complexity is $O(LM\log(LM))$. The decoding complexity is mainly dominated by AMP decoding and list decoding. Firstly, the complexity of AMP decoding is dominated by two FFTs, so its complexity is $O(LM\log(LM))$. Secondly, the complexity of list decoding is $O(\frac{\widetilde{L}}{K} \log(M) \left(2S\cdot (K+r) + S\right))$, i.e., $O(S L  \log(M))$. Therefore, the decoding complexity is $O(LM\log(LM)+SL\log(M))$. Once the CRC code for our concatenated coding scheme is chosen, the number of redundant bits $r$ will be constant and it will stay constant as $L$ grows large. The number of information bits $K$ can be varied within some range (see, e.g., \cite{koopman2004cyclic}). The choice of $K$ will not only affect the rate loss of the overall code, but also affect the error detection capability of the CRC code. Thus, the choice of $K$ leads to a trade-off between error detection capability and rate loss. Furthermore, the choice of CRC codes, i.e., the parameters $r$ and $K$, will not affect the decoding complexity since the decoding complexity is $O(LM\log(LM)+SL\log(M))$.

\section{Extensions to Spatially Coupled SPARCs} \label{Extention_SC}
\subsection{Related Work}
Besides AMP decoders for the original SPARCs having been proven to be asymptotically capacity achieving with a suitably chosen power allocation, ``spatial coupling" is the other technique with which the corresponding AMP decoder has been proved to be asymptotically capacity achieving as well. The ``spatial coupling" technique was first introduced in the context of LDPC codes. In particular, \cite{kudekar2011threshold,kudekar2013spatially} showed that the ``coupling" of several copies of individual codes increases the algorithmic threshold (belief propagation (BP) threshold) of this new ensemble to the information-theoretic threshold (maximum a posteriori (MAP) threshold) of the underlying ensemble. This phenomenon is called ``threshold saturation'' and it has also been observed in the compressed sensing setting~\cite{kudekar2010effect, donoho2013information}. In the context of SPARCs, spatial coupling was first introduced by Barbier~\etal in~\cite{barbier2015approximate} and then Barbier~\etal~\cite{barbier2017approximate} used the replica analysis to show that SC-SPARCs with AMP decoders are capacity achieving over AWGN channels. The fully rigorous and complete proof was first given by Rush~\etal~\cite{rush2020capacity}. SC-SPARCs can be suitably modified to the complex AWGN channel, and Hsieh~\etal~\cite{hsieh2020modulated} proposed modulated (spatially coupled) SPARCs in which information is encoded by both the location and the value of the non-zero entries in $\bm{\beta}$ in the case of complex AWGN channels. Since we mainly focus on techniques for improving the finite-length performance, SC-SPARCs for complex AWGN channels here will only encode information via the localation of the non-zero entries in $\bm{\beta}$. In other words, we will consider unmodulated SC-SPARCs over complex AWGN channels as in \cite{hsieh2020modulated}. The details regarding AMP decoders for SC-SPARCs over complex AWGN channels can be found in \cite{hsieh2020modulated} and will be omitted here.

\subsection{List Decoding for Spatially Coupled SPARCs}
The previous results mainly focus on the asymptotic characterization of the error performance of SC-SPARCs. There are very few papers (e.g.,~\cite{liang2017clipping}) discussing how to further improve the finite-length performance of SC-SPARCs. In this subsection, we will show that list decoding can improve the finite-length performance of SC-SPARCs for the low-rate region\footnote{When the overall rate of our concatenated coding scheme is less than 1 bit/(channel use)/dimension, we say it is in the low-rate region; otherwise, we say it is in the high-rate region.} via simulation results. A detailed introduction of SC-SPARCs can be found in~\cite{hsieh2018spatially}.

Before we get into the list decoding part, we need first introduce some extra parameters used in the design matrix of SC-SPARCs via the following definition (see \cite{hsieh2018spatially} for details).

\begin{definition}[A $\left( w, \Lambda \right)$ base matrix for SC-SPARCs]
The $\left( w, \Lambda \right)$ base matrix $W$ is described by two parameters: coupling width $w \geq 1$ and coupling length $\Lambda \geq 1$. The matrix has $L_R = \Lambda + w -1$ rows and $L_C = \Lambda$ columns with exactly $w$ non-zero entries each column. Specifically, for an average power constraint $P$, the $(r, c)$th entry of the base matrix is given by 
\[
W_{r,c} = 
	\begin{cases}
		P \cdot \frac{\Lambda+w -1}{w} \!\! & \text{if}\ c \leq r \leq c+w -1, \\
		0 \!\! & \text{otherwise.}
	\end{cases}
\! r \in \left[L_R\right], c \in \left[L_C\right]. 
\]
\end{definition}

For example, for $w = 3$, the above $\left(w, \Lambda \right)$ base matrix $W$ will be in the shape 
\[
W = 
\begin{bmatrix}
    W_{1,1} & 0 & 0 & \cdots  & 0 \\
    W_{2,1} & W_{2,2} & 0 & \cdots  &  0\\
    W_{3,1} & W_{3,2} & W_{3,3} & \cdots  & 0 \\
    0 &  W_{4,2}& W_{4,3} & \ddots & 0 \\
    0& 0  & W_{5,3} & \ddots  & W_{L_{R}-2,L_C} \\
     \vdots & \vdots  & \vdots & \ddots  & W_{L_{R}-1,L_C}\\
     0&0  &0  & \cdots  & W_{L_R,L_C}
\end{bmatrix}.
\]

Replacing each non-zero entry with an $M_R \times M_C$  matrix with entries i.i.d. sampled from $\mathcal{CN}(0,\,W_{r,c}/L)$ and each zero entry with an $M_R \times M_C$ all-zero matrix, we can finally get the design matrix $A$ with the same structure as the above base matrix. Note that $n = M_R  L_R$ and $ML=M_C L_C$.

The procedure for encoding and decoding SC-SPARCs with CRC codes is very similar to the one discussed in Section~\ref{CRC_encoding_decoding}, so it is omitted here. We evaluate the BER performance and the SecER performance of SC-SPARCs concatenated with CRC codes and without CRC codes over the complex AWGN channels for different overall rates.

In order to make the performance of SC-SPARCs and the performance of the original SPARCs discussed previously comparable, we will consider the settings as close as possible to the ones in Section~\ref{simulation_SPARCs_PA}.\footnote{Due to the structure of design matrices of SC-SPARCs, the overall rate of the concatenated code cannot be chosen arbirarily.}

The setup for the simulation results in Fig.~\ref{SC_low_rate} is as follows. We consider SC-SPARCs with overall rate $R = 0.8$ bits/(channel use)/dimension. Moreover,
\begin{itemize}
\item we choose the number of information sections $L$ to be 1000;
\item we choose the size of each section $M$ to be 512;
\item we choose the number of information bits $K$ to be 100;
\item we use the 8-bit CRC code whose generator polynomial is 0x97$=x^8+x^5+x^3+x^2+x+1$ (see, e.g., \cite{koopman2004cyclic});
\item for the base matrix, we choose $\Lambda = 40$, $w = 6$; therefore, $L_C = \Lambda = 40$, $L_R = \Lambda + w -1 =45$;
\item we choose the list size for list decoding to be $S=64$;
\end{itemize}

\begin{figure}[t!]
\centering
\begin{subfigure}{\linewidth}
\centering
\includegraphics[scale=0.19]{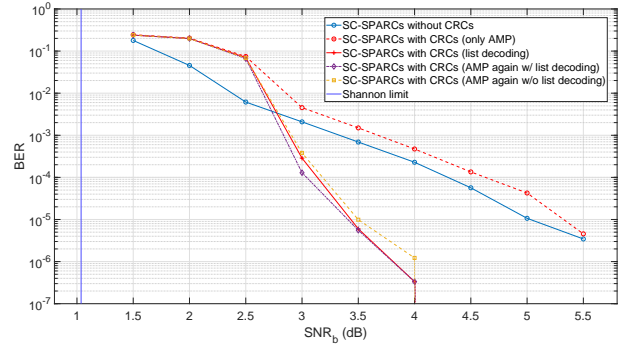}
\caption{BER performance comparison of SC-SPARCs with CRC codes using list decoding and original SC-SPARCs without CRC codes using only AMP. Besides that, ``AMP again'' with and without using list decoding are also included.}  \label{SC_BER_low_rate}
\end{subfigure}
\hfill
\begin{subfigure}{\linewidth}
\centering
\includegraphics[scale=0.19]{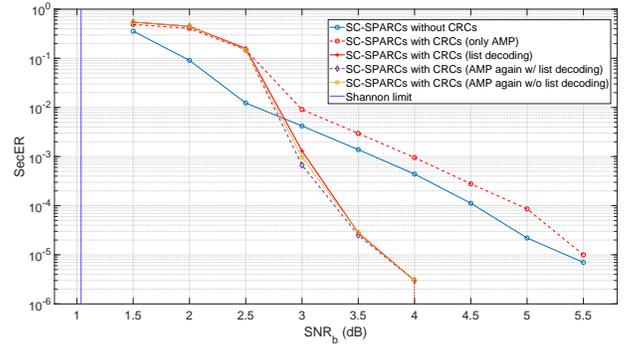}
\caption{SecER performance comparison of SC-SPARCs with CRC codes using list decoding and original SC-SPARCs without CRC codes using only AMP. Besides that, ``AMP again'' with and without using list decoding are also included.} \label{SC_SecER_low_rate}
\end{subfigure}
\caption{BER and SecER performances comparison of SC-SPARCs with overall rate $R=0.8$ bits/(channel use)/dimension.}  \label{SC_low_rate}
\end{figure}

The plot in Fig. \ref{SC_BER_low_rate} shows that SC-SPARCs concatenated with CRC codes can also provide a steep waterfall-like behavior above a threshold of $\mathrm{SNR}_{\mathrm{b}} = 4$ dB, which is larger than the result in the original SPARCs case. Furthermore, we observe that the BER performance of SC-SPARCs when ``AMP again" is employed without further list decoding is worse than the BER performance when ``AMP again" is not employed. This unusual result can be explained as follows, based on the SecER performance of SC-SPARCs.
\begin{itemize}
    \item From Fig. \ref{SC_SecER_low_rate}, we see that the SecER performance of SC-SPARC when ``AMP again" is employed without further list decoding is better than the SecER performance when ``AMP again" is not employed. This explains that the AMP decoder for the remaining part can further decode a few sections of original messages.
    \item However, in terms of BER performance, the amount of extra bits successfully decoded by the AMP decoder for the remaining part might not be larger than the amount of corrected bits within the wrongly decoded sections during the list decoding stage. This is because list decoding separately decodes the original messages based on the way we partition the original messages during the CRC encoding procedure and this decoding results in some mostly corrected sections which have corrected most of bits contained in each section's information. (E.g., when $M=512$, each section contains 9 bits of information.) Because the AMP decoder decodes the original messages in the section-wise level, the AMP decoder cannot guarantee that it decodes more bits successfully although it successfully decodes some sections within these remaining sections. 
\end{itemize}

\begin{figure}[t!]
\centering
\begin{subfigure}{\linewidth}
\centering
\includegraphics[scale=0.19]{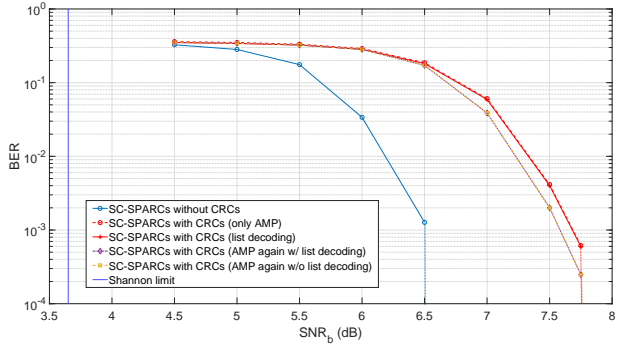}
\caption{BER performance comparison of SC-SPARCs with CRC codes using list decoding and original SC-SPARCs without CRC codes using only AMP. Besides that, ``AMP again'' with and without using list decoding are also included.}  \label{SC_BER_high_rate}
\end{subfigure}
\hfill
\begin{subfigure}{\linewidth}
\centering
\includegraphics[scale=0.19]{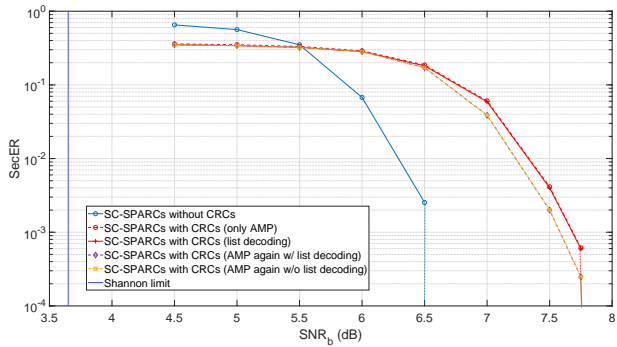}
\caption{SecER performance comparison of SC-SPARCs with CRC codes using list decoding and original SC-SPARCs without CRC codes using only AMP. Besides that, ``AMP again'' with and without using list decoding are also included.} \label{SC_SecER_high_rate}
\end{subfigure}
\caption{BER and SecER performances comparison of SC-SPARCs with overall rate $R=1.493$ bits/(channel use)/dimension.}  \label{SC_high_rate}
\end{figure}

The setup of the simulation results in Fig. \ref{SC_high_rate} is the same as for Fig. \ref{SC_low_rate} except that we consider SC-SPARCs with overall rate $R = 1.493$ bits/(channel use)/dimension. Surprisingly, the performance of this concatenated coding scheme is much worse than the performance of SC-SPARCs without CRC codes. Therefore, we conclude that under the current setup, list decoding for SC-SPARCs with CRC codes in the high-rate region might not significantly improve the finite-length performance. It is an interesting direction for future work to take the spatial coupling structure into account when we encode the original messages with CRC codes to further improve the finite-length performance of SC-SPARCs concatenated with CRC codes especially for the high-rate region. Another observation from Fig.~\ref{SC_SecER_high_rate} is that the SecER performance of SC-SPARCs with CRC codes is better than the SecER performance of SC-SPARCs without CRC codes when $\mathrm{SNR}_{\mathrm{b}}$ is below some value (in this case, $\mathrm{SNR}_{\mathrm{b}} < 5.5$ dB), and this is because our implementation scheme puts all the CRC redundant sections in the middle of the encoded messages. Due to the wave-like decoding behavior of SC-SPARCs, those CRC redundant sections are more likely to be wrongly decoded and relatively more information sections will be decoded successfully. This finally results in a relatively lower SecER since SecER is the number of section errors within information sections divided by the number of information sections.

\section{Conclusion} \label{conclusion}
In this paper, we considered SPARCs over complex AWGN channels. We first proposed the AMP decoder for complex AWGN channels, and then introduced a novel design-matrix construction based on circulant matrices. Finally, we proposed a concatenated coding scheme that uses SPARCs concatenated with CRC codes on the encoding side and uses list decoding on the decoding side. The finite-length performance of this scheme is significantly improved compared with the original SPARCs as well as the coding scheme in \cite{greig2017techniques}. Moreover, our concatenated coding scheme has the additional benefit of exhibiting insensitivity of parameters used in the iterative power allocation scheme, especially for high-rate SPARCs. Besides that, we further applied this concatenated coding scheme to SC-SPARCs. However, currently, it does not seem to be a good choice based on the same setup as the original SPARCs case.

Some interesting directions for future work regarding list decoding and (spatially coupled) SPARCs are as follows.
\begin{itemize}
    \item The concatenated coding scheme proposed in this paper can be naturally extended to modulated SPARCs proposed by Hsieh~\etal in \cite{hsieh2020modulated}. 
    \item It is natural to ask how the list size affects the performance, so one of the interesting directions will be to conduct an (information) theoretical analysis of our list decoding scheme here. (E.g., \cite{cocskun2020bounds} provides some information-theoretic quantities associated with the list size required for successive-cancellation-based list decoding of polar codes.) 
    \item Recently, several papers applied SPARCs and AMP decoders in unsourced random access scenario, e.g., \cite{fengler2019sparcs, amalladinne2020approximate,amalladinne2020unsourced}. The structure of SC-SPARCs should be a good fit for this scenario especially if we combine it with list decoding suitably (e.g., \cite{liang2020compressed}). 
    \item The performance of our concatenated coding scheme for SC-SPARCs was worse than the performance in original SPARCs case with power allocation scheme under the current setup. We expect that it can be largely improved if we combine the power allocation and spatial coupling in a suitable way.
\end{itemize}

\ifCLASSOPTIONcaptionsoff
  \newpage
\fi

\bibliographystyle{IEEEtranTCOM}
\bibliography{IEEEabrv,mybib}

% Generated by IEEEtranTCOM.bst, version: 1.13 (2008/09/30)
\begin{thebibliography}{10}
\baselineskip 12pt
\providecommand{\url}[1]{#1}
\csname url@samestyle\endcsname
\providecommand{\newblock}{\relax}
\providecommand{\bibinfo}[2]{#2}
\providecommand{\BIBentrySTDinterwordspacing}{\spaceskip=0pt\relax}
\providecommand{\BIBentryALTinterwordstretchfactor}{4}
\providecommand{\BIBentryALTinterwordspacing}{\spaceskip=\fontdimen2\font plus
\BIBentryALTinterwordstretchfactor\fontdimen3\font minus
  \fontdimen4\font\relax}
\providecommand{\BIBforeignlanguage}[2]{{%
\expandafter\ifx\csname l@#1\endcsname\relax
\typeout{** WARNING: IEEEtran.bst: No hyphenation pattern has been}%
\typeout{** loaded for the language `#1'. Using the pattern for}%
\typeout{** the default language instead.}%
\else
\language=\csname l@#1\endcsname
\fi
#2}}
\providecommand{\BIBdecl}{\relax}
\BIBdecl

\bibitem{Cao2104:Using}
H.~Cao and P.~Vontobel, ``Using list decoding to improve the finite-length
  performance of sparse regression codes,'' in \emph{Proc. IEEE Information
  Theory Workshop}, Riva del Garda, Italy, Apr. 2021.

\bibitem{joseph2012least}
A.~Joseph and A.~R. Barron, ``Least squares superposition codes of moderate
  dictionary size are reliable at rates up to capacity,'' \emph{IEEE Trans.
  Inf. Theory}, vol.~58, no.~5, pp. 2541--2557, 2012.

\bibitem{venkataramanan2014lossy}
R.~Venkataramanan, T.~Sarkar, and S.~Tatikonda, ``Lossy compression via sparse
  linear regression: Computationally efficient encoding and decoding,''
  \emph{IEEE Trans. Inf. Theory}, vol.~60, no.~6, pp. 3265--3278, 2014.

\bibitem{venkataramanan2014lossyperform}
R.~Venkataramanan, A.~Joseph, and S.~Tatikonda, ``Lossy compression via sparse
  linear regression: Performance under minimum-distance encoding,'' \emph{IEEE
  Trans. Inf. Theory}, vol.~60, no.~6, pp. 3254--3264, 2014.

\bibitem{venkataramanan2012sparse}
R.~Venkataramanan and S.~Tatikonda, ``Sparse regression codes for
  multi-terminal source and channel coding,'' in \emph{Proc. 50th Allerton
  Conf. Commun., Control, Comput.}, 2012, pp. 1966--1974.

\bibitem{joseph2013fast}
A.~Joseph and A.~R. Barron, ``Fast sparse superposition codes have near
  exponential error probability for $ {R}< {C}$,'' \emph{IEEE Trans. Inf.
  Theory}, vol.~60, no.~2, pp. 919--942, 2013.

\bibitem{barron2012high}
A.~R. Barron and S.~Cho, ``High-rate sparse superposition codes with
  iteratively optimal estimates,'' in \emph{Proc. IEEE Int. Symp. Inf. Theory},
  2012, pp. 120--124.

\bibitem{barbier2014replica}
J.~Barbier and F.~Krzakala, ``Replica analysis and approximate message passing
  decoder for superposition codes,'' in \emph{Proc. IEEE Int. Symp. Inf.
  Theory}, 2014, pp. 1494--1498.

\bibitem{barbier2017approximate}
------, ``Approximate message-passing decoder and capacity achieving sparse
  superposition codes,'' \emph{IEEE Trans. Inf. Theory}, vol.~63, no.~8, pp.
  4894--4927, 2017.

\bibitem{rush2017capacity}
C.~Rush, A.~Greig, and R.~Venkataramanan, ``Capacity-achieving sparse
  superposition codes via approximate message passing decoding,'' \emph{IEEE
  Trans. Inf. Theory}, vol.~63, no.~3, pp. 1476--1500, 2017.

\bibitem{bayati2011dynamics}
M.~Bayati and A.~Montanari, ``The dynamics of message passing on dense graphs,
  with applications to compressed sensing,'' \emph{IEEE Trans. Inf. Theory},
  vol.~57, no.~2, pp. 764--785, 2011.

\bibitem{greig2017techniques}
A.~Greig and R.~Venkataramanan, ``Techniques for improving the finite length
  performance of sparse superposition codes,'' \emph{IEEE Trans. Commun.},
  vol.~66, no.~3, pp. 905--917, 2017.

\bibitem{frank1962phase}
R.~Frank, S.~Zadoff, and R.~Heimiller, ``Phase shift pulse codes with good
  periodic correlation properties (corresp.),'' \emph{IRE Trans. Inf. Theory},
  vol.~8, no.~6, pp. 381--382, 1962.

\bibitem{milewski1983periodic}
A.~Milewski, ``Periodic sequences with optimal properties for channel
  estimation and fast start-up equalization,'' \emph{IBM J. Res. Develop.},
  vol.~27, no.~5, pp. 426--431, 1983.

\bibitem{hsieh2020modulated}
K.~Hsieh and R.~Venkataramanan, ``Modulated sparse superposition codes for the
  complex {AWGN} channel,'' \emph{arXiv preprint arXiv:2004.09549}, 2020.

\bibitem{barron2010least}
A.~R. Barron and A.~Joseph, ``Least squares superposition codes of moderate
  dictionary size, reliable at rates up to capacity,'' \emph{arXiv preprint
  arXiv:1006.3780}, 2010.

\bibitem{rush2018error}
C.~Rush and R.~Venkataramanan, ``The error probability of sparse superposition
  codes with approximate message passing decoding,'' \emph{IEEE Trans. Inf.
  Theory}, vol.~65, no.~5, pp. 3278--3303, 2018.

\bibitem{gallager1962low}
R.~Gallager, ``Low-density parity-check codes,'' \emph{IRE Trans. Inf. Theory},
  vol.~8, no.~1, pp. 21--28, 1962.

\bibitem{vasic2004coding}
B.~Vasic and E.~M. Kurtas, \emph{Coding and Signal Processing for Magnetic
  Recording Systems}.\hskip 1em plus 0.5em minus 0.4em\relax CRC Press, 2004.

\bibitem{koopman2004cyclic}
P.~Koopman and T.~Chakravarty, ``Cyclic redundancy code ({CRC}) polynomial
  selection for embedded networks,'' in \emph{Proc. Int. Conf. Dependable Syst.
  Netw.}, 2004, pp. 145--154.

\bibitem{kudekar2011threshold}
S.~Kudekar, T.~J. Richardson, and R.~L. Urbanke, ``Threshold saturation via
  spatial coupling: Why convolutional {LDPC} ensembles perform so well over the
  {BEC},'' \emph{IEEE Trans. Inf. Theory}, vol.~57, no.~2, pp. 803--834, 2011.

\bibitem{kudekar2013spatially}
S.~Kudekar, T.~Richardson, and R.~L. Urbanke, ``Spatially coupled ensembles
  universally achieve capacity under belief propagation,'' \emph{IEEE Trans.
  Inf. Theory}, vol.~59, no.~12, pp. 7761--7813, 2013.

\bibitem{kudekar2010effect}
S.~Kudekar and H.~D. Pfister, ``The effect of spatial coupling on compressive
  sensing,'' in \emph{Proc. 48th Annu. Allerton Conf. Commun., Control,
  Comput.}, 2010, pp. 347--353.

\bibitem{donoho2013information}
D.~L. Donoho, A.~Javanmard, and A.~Montanari, ``Information-theoretically
  optimal compressed sensing via spatial coupling and approximate message
  passing,'' \emph{IEEE Trans. Inf. Theory}, vol.~59, no.~11, pp. 7434--7464,
  2013.

\bibitem{barbier2015approximate}
J.~Barbier, C.~Sch{\"u}lke, and F.~Krzakala, ``Approximate message-passing with
  spatially coupled structured operators, with applications to compressed
  sensing and sparse superposition codes,'' \emph{Journal of Statistical
  Mechanics: Theory and Experiment}, vol. 2015, no.~5, p. P05013, 2015.

\bibitem{rush2020capacity}
C.~Rush, K.~Hsieh, and R.~Venkataramanan, ``Capacity-achieving spatially
  coupled sparse superposition codes with {AMP} decoding,'' \emph{arXiv
  preprint arXiv:2002.07844}, 2020.

\bibitem{liang2017clipping}
S.~Liang, J.~Ma, and L.~Ping, ``Clipping can improve the performance of
  spatially coupled sparse superposition codes,'' \emph{IEEE Commun. Letters},
  vol.~21, no.~12, pp. 2578--2581, 2017.

\bibitem{hsieh2018spatially}
K.~Hsieh, C.~Rush, and R.~Venkataramanan, ``Spatially coupled sparse regression
  codes: Design and state evolution analysis,'' in \emph{Proc. IEEE Int. Symp.
  Inf. Theory}, 2018, pp. 1016--1020.

\bibitem{cocskun2020bounds}
M.~C. Co{\c{s}}kun and H.~D. Pfister, ``Bounds on the list size of successive
  cancellation list decoding,'' in \emph{Proc. International Conference on
  Signal Processing and Communications}, 2020, pp. 1--5.

\bibitem{fengler2019sparcs}
A.~Fengler, P.~Jung, and G.~Caire, ``{SPARCs} for unsourced random access,''
  \emph{arXiv preprint arXiv:1901.06234}, 2019.

\bibitem{amalladinne2020approximate}
V.~K. Amalladinne, K.~Pradhan, C.~Rush, J.-F. Chamberland, and K.~R. Narayanan,
  ``On approximate message passing for unsourced access with coded compressed
  sensing,'' in \emph{Proc. IEEE Int. Symp. Inf. Theory}, 2020, pp. 2995--3000.

\bibitem{amalladinne2020unsourced}
V.~K. Amalladinne, A.~K. Pradhan, C.~Rush, J.-F. Chamberland, and K.~R.
  Narayanan, ``Unsourced random access with coded compressed sensing:
  Integrating {AMP} and belief propagation,'' \emph{arXiv preprint
  arXiv:2010.04364}, 2020.

\bibitem{liang2020compressed}
S.~Liang, C.~Liang, J.~Ma, and L.~Ping, ``Compressed coding, {AMP}-based
  decoding, and analog spatial coupling,'' \emph{IEEE Trans. Commun.}, vol.~68,
  no.~12, pp. 7362--7375, 2020.

\end{thebibliography}

\end{document}